\newtheorem{example}{Example}
\newtheorem{defn}{Definition}
\newtheorem{Theorem}{Theorem}
\newtheorem{Problem}{Problem}
\newtheorem{prop}{Proposition}
\newtheorem{rem}{Remark}
\def\beq{\begin{equation}}
\def\eeq{\end{equation}}
\newcommand{\Node}{\mathcal{V}}
\newcommand{\node}{v}
\newcommand{\Edge}{\mathcal{E}}
\newcommand{\Asmp}{A}
\newcommand{\Gar}{G}
\newcommand{\lra}{\lambda}
\newcommand{\asmp}[2] {A_{#1}^{#2}}
\newcommand{\gar}[2] {G_{#1}^{#2}}
\newcommand{\ws}[1] {\llbracket {#1} \rrbracket}
\title{SAT-based Distributed Reactive Control Protocol Synthesis for Boolean Networks}
\author{Yunus Emre Sahin, Necmiye Ozay
	\thanks{A preliminary version of this work is presented in \cite{sahin16} as a work-in-progress abstract. This is an extended version of the paper \cite{Sahin16MSC}  to appear in the Proceedings of the IEEE Multi-Conference on Systems and Control (MSC), Buenos Aires, Argentina, September 19-22, 2016. This work is supported in part by NSF grants CNS-1446298 and ECCS-1553873, and DARPA grant N66001-14-1-4045.}
\thanks{The authors are with the Electrical Engineering and Computer Science Department, University of Michigan, Ann Arbor, MI 48109, email: {\tt\small ysahin,necmiye@umich.edu}. }}
\begin{document}

\maketitle
\thispagestyle{empty}
\pagestyle{empty}

\begin{abstract} 
This paper considers the synthesis of distributed reactive control protocols for a Boolean network in a distributed manner. We start with a directed acyclic graph representing a network of Boolean subsystems and a global contract, given as an assumption-guarantee pair. Assumption captures the environment behavior, and guarantee is the requirements to be satisfied by the system. Local assumption-guarantee contracts, together with local control protocols ensuring these local contracts, are computed recursively for each subsystem based on the partial order structure induced by the directed acyclic graph. By construction, implementing these local control protocols together guarantees the satisfaction of the global assumption-guarantee contract. Moreover, local control protocol synthesis reduces to quantified satisfiability (QSAT) problems in this setting. We also discuss structural properties of the network that affect the completeness of the proposed algorithm. 
As an application, we show how an aircraft electric power system can be represented as a Boolean network, and we synthesize distributed control protocols from a global assumption-guarantee contract. The assumptions capture possible failures of the system components, and the guarantees capture safety requirements related to power distribution.



\end{abstract}


%

\section{Introduction}


Cyber-physical systems should be able to react to internal and external events since they are expected to work in dynamic settings. Ability to algorithmically synthesize reactive control protocols that can guarantee correct and safe behavior of such systems, can reduce the cost and efforts spent in extensive simulation and testing phase. This fact motivated the research on correct-by-construction control synthesis approaches that start with accurate system models and formal system requirements and that algorithmically construct the controllers to achieve desired closed loop behavior \cite{tabuada2009verification}. However, similar principled approaches for synthesis of distributed controllers are scarce partly due to the fact that general distributed synthesis problem is computationally hard \cite{pnueli90} when temporal logic requirements are considered.

There is a relation between the control architecture and difficulty of the distributed synthesis problem \cite{FinkbeinerS05}. Recently some fragments of linear temporal logic (LTL) \cite{chatterjee2013distributed} and some control architectures  \cite{madhusudan01} that render the synthesis problem decidable have been identified. However, efficient algorithms for the decidable fragments are somehow lacking \cite{chatterjee2013distributed}. In this work, we consider a simplified version of the distributed control synthesis problem where the system is modeled as a Boolean network and specifications are given as propositional formulae on environment and system variables. This is a decidable problem since the set of all the controllers is enumerable. It can also be seen as a very special case of distributed LTL synthesis with invariance specifications and memoryless system models. We start by showing that the problem is generally hard (i.e., NEXPTIME-complete) even in this simplified setting. We then propose a sound algorithm to synthesize distributed controllers in a relatively efficient way. The algorithm is shown to be complete when the specification and the control architecture satisfy certain structural properties.


Boolean networks are introduced in \cite{KAUFFMAN1969437} to model and analyze the gene circuits. Recent results on control and analysis of such networks can be found in \cite{cheng2010analysis}. However, to the best of our knowledge, distributed control problem has not been addressed before in this domain.
Our algorithm is motivated by interface theories for stateless components \cite{de2001interface} and contract-based verification techniques \cite{Benvenuti08,bozzano2014formal}. However, instead of verification, we consider controllable components and synthesis of controllers. In particular, the proposed algorithm starts from a global assume-guarantee specification and maps it to local assume-guarantee specifications for individual subsystems. Finding these local assume-guarantee specifications leads to a modular design framework. For instance, local controllers for subsystems can be synthesized independently via SAT or a local subsystem can be replaced with another subsystem that satisfies the same local assume-guarantee specifications. 

In the second part of the paper, we consider an application of the proposed approach to the co-synthesis of a distributed control architecture and correct-by-construction control protocols for an aircraft electric power system (EPS) \cite{nuzzocontract}. We show how the steady state behavior of an electric power system circuit can be modeled as a Boolean network and synthesize distributed controllers using the proposed algorithm.

\section{Preliminaries}\label{sec:prelim}

\subsection{Notation}
The set of positive integers up to and including $n$ is denoted by $\mathbb{N}_n$. For a set $K$, its cardinality is denoted by $|K|$, and its complement is denoted by $K^c$. The Boolean domain $\{True,False\}$ is denoted by $\mathbb{B}$. Given a set $X$ of Boolean variables, the set of all valuations is denoted by $V_x = \mathbb{B}^{|X|}$ and called the domain of $X$, and a specific valuation is denoted by $x\in V_x$.

  \subsection{Graph theory}

A \emph{graph} is a pair $\mathcal{G}=(\Node,\Edge)$, where $\Node$ is a set of nodes (vertices) and $\Edge\subset (\Node\times \Node)$ is a set of edges.  

The graph $\mathcal{G}$ is called \emph{directed} (digraph) when edges have a direction. We say that edge $e = (u,v) \in \Edge$ points from $u$ to $v$. A \emph{path} $\sigma$ on a digraph is a sequence $\sigma= \{e_1,\ldots,e_k\}$ of consecutive edges $e_i = (\node_s^i, \node_d^i)$, i.e., with $\node_d^{i} = \node_s^{i+1}$ for all $1\leq i < k$.  The path $\sigma$ is said to be a path from node $\node_s^1$ to node $\node_d^k$. A path with identical start and end nodes (i.e., $\node_s^1 = \node_d^k$) is called a \emph{cycle}. If directed graph $\mathcal{G}$ has no cycles, it is called a \emph{directed acyclic graph}.
Furthermore a directed acyclic graph is called a \emph{tree} if any two vertices are connected with at most one path. Any disjoint union of trees is called a \emph{forest}. A node without any outgoing edge is called a \emph{leaf node}.

A graph $\mathcal{G}'=(\Node',\Edge')$ is called an \emph{induced subgraph} of $\mathcal{G}$ if $\Node'\subseteq \Node$ and $\Edge' \subset (  (\Node'\times\Node') \cap \Edge)$. All the subgraphs we consider in this paper are induced subgraphs, so, we call them just subgraph for short. A digraph is called \emph{strongly connected} if there is a path from each node to every other node. The \emph{strongly connected components} of a directed graph $\mathcal{G}$ are its maximal strongly connected subgraphs. Here maximal is used in the sense that no strongly connected component of $\mathcal{G}$ is a subgraph of some other strongly connected subgraph of $\mathcal{G}$.
 


\subsection{Boolean systems and networks}

A \emph{Boolean system} $S$ is a tuple of the form $\left<U, E, Y, f\right>$ where
\begin{itemize}
\item $U = \{u^{(1)},\ldots,u^{(n_u)}\}$ is the set of Boolean control inputs with domain $V_u \doteq \mathbb{B}^{n_u}$,
\item $E= \{e^{(1)},\ldots,e^{(n_e)}\}$ is the set of Boolean environment (uncontrolled) inputs, disjoint from $U$, with domain $V_{e}\doteq \mathbb{B}^{n_e}$,
\item $Y = \{y^{(1)},\ldots,y^{(n_y)}\}$ is the set of Boolean outputs with domain $V_y\doteq \mathbb{B}^{n_y}$, and
\item $f:V_u \times V_e \rightarrow V_y$ is the system function, that maps the inputs to the outputs.
\end{itemize}
The $j^{th}$ component of the system function is denoted by $f^{(j)} : V_u \times V_e \rightarrow V_{y^{(j)}}$. Note that this system definition corresponds to a memoryless or stateless system, that is,
\beq \label{eq:y=fue}
y = f(u,e)
\eeq
where $y\in V_y$, $u\in V_u$ and $e \in V_e$ are valuations of the outputs, control inputs and environment inputs, respectively. 


Given two systems $S_1=\left<U_1, E_1, Y_1, f_1\right>$ and $S_2=\left<U_2, E_2, Y_2, f_2\right>$, a \emph{serial interconnection} from $S_1$ to $S_2$ is formed by equating a set of outputs of $S_1$ to a set of environment inputs of $S_2$. We denote the set of shared variables in a serial interconnection as 
\beq\label{eq:shared}
{I_{1,2}} = \{(k,l)\mid y_1^{(k)} = e_2^{(l)}\},
\eeq
and say $S_1$ is connected to $S_2$ when the set of shared variables is nonempty, i.e., ${I_{1,2}}\neq\emptyset$. 


A \emph{Boolean network} is a tuple $S = \langle \{S_i\}_{i=1}^n, \mathcal I \rangle$ that consists of a collection of subsystems $S_i=\left<U_i, E_i, Y_i, f_i\right>$, for $i\in\mathbb{N}_n$, and an interconnection structure $\mathcal I =\{I_{i,j}\}_{i,j\in\mathbb{N}_n}$, where $I_{i,j}$ is defined as in \eqref{eq:shared}. The interconnection structure represents which subsystems are connected to which others through which variables. The interconnection structure induces a digraph $\mathcal G_S = (\Node_S,\Edge_S)$ called the \emph{system graph}, where $\Node_S = \{S_1,\ldots, S_n\}$ and $\Edge_S=\{(S_i,S_j)\mid {I_{i,j}}\neq\emptyset\}$. The set $E^{int}_i \subseteq E_i$ of internal inputs for a system $S_i$ is defined as $E^{int}_i = \{e_i^{(l)} \in E_i \mid \exists j\in \mathbb{N}_n, k\in \mathbb{N}_{n_{y_j}}, (k,l)\in I_{j,i}\}$. Remaining inputs of  $S_i$ are called external inputs and defined as $E_i^{ext} = E_i\setminus E^{int}_i$. 

For well-posedness of the network, we assume the following: (i) If two systems are connected to a third one, they are connected through different environment inputs. That is, if there exist some $i,j\in\mathbb{N}_n$, $i\neq j$ with $(k_1,l) \in I_{i,j}$, then there does not exist any $k_2$ with $(k_2,l) \in I_{i',j}$ for any $i'\neq i$. (ii) The system graph $\mathcal G_S$ is a directed acyclic graph. Note that in the memoryless Boolean setting, cycles in the interconnection structure lead to mostly not well-defined algebraic loops, therefore such interconnections are not considered. In fact, the Boolean network $S$ itself is a Boolean system whose inputs, outputs and system function can be derived from those of its subsystems, and the interconnection structure.

\subsection{Specifications}

We consider specifications in the form of \emph{assumption-guarantee} pairs given in terms of Boolean-valued functions. This section introduces some terminology regarding Boolean-valued functions and assume-guarantee specifications.

Let  $\varphi: V_x \rightarrow \mathbb{B}$ be a Boolean-valued function of variables in $X$. Each $\varphi$ can be equivalently represented by its satisfying set $\llbracket \varphi \rrbracket \subseteq V_x$ defined as follows:
\beq
\llbracket \varphi\rrbracket \doteq \{x \in V_{x} \mid \varphi(x)=True  \}.
\eeq
Propositional operations on Boolean-valued functions are defined in the usual way. The symbols $\neg,\land$ and $\lor$ are used for logical operations \emph{NOT} (negation), \emph{AND} (conjunction) and \emph{OR} (disjunction), respectively and performs as follows:
$\llbracket \neg \psi \rrbracket \doteq \llbracket \psi \rrbracket^c$, $\llbracket \psi_1 \wedge \psi_2 \rrbracket \doteq \ws{\psi_1} \cap \ws{\psi_2}$, and $\llbracket \psi_1 \lor \psi_2 \rrbracket \doteq \ws{\psi_1} \cup \ws{\psi_2}$.

Additional operations such as \emph{XOR} (exclusive-or) denoted by $\oplus$, can be defined using the main operators above: $ \psi_1 \oplus \psi_2 \doteq (\psi_1 \land \neg\psi_2) \lor (\neg\psi_1 \land \psi_2)$.

\begin{defn}	
	Let $\varphi: V \rightarrow \mathbb{B}$ be a Boolean-valued function and assume $V = \prod_{i \in J} V_i$. 
	The \emph{projection} of $\varphi$ onto a set $I\subseteq J$ of variables is another Boolean function $\varphi |_{I} : \prod_{i \in I} V_{i} \rightarrow \mathbb{B}$ whose satisfying set is defined as  
	$$\llbracket \varphi |_{I}\rrbracket \doteq \left\{ x \in \prod_{i \in I} V_{i} \mid \exists y \in \prod_{j \notin I} V_{j} \text{ such that } (x,y)\in \llbracket \varphi \rrbracket \right\}.$$ 
\end{defn}

When the identity of the variables are clear from the context, the order of the tuple is ignored. In other words, saying $(x,y)\in \llbracket \varphi \rrbracket$ is equivalent to saying $(y,x)\in \llbracket \varphi \rrbracket$ and vice versa. 

Assumptions and guarantees are Boolean functions that capture the  \emph{a priori} knowledge about the uncontrolled inputs and the desired safe behavior of the system outputs, respectively.

\begin{defn} An \emph{assumption} $\Asmp: V_{e} \rightarrow \mathbb{B}$ for a Boolean system $S$ is a Boolean-valued function of its uncontrolled inputs. \end{defn}
We say that $(e_1^{ext},\dots,e_i^{ext},\dots,e_{n}^{ext}) \in V_e$ is \emph{admissible} if it is in the satisfying set of $\Asmp$. With a slight abuse of terminology, we also say $e_i^{ext}$ is admissible, for the sake of convenience. 

\begin{defn} A \emph{guarantee} $\Gar: V_{y} \rightarrow \mathbb{B}$ for a Boolean system $S$ is a Boolean-valued function of its outputs. 
\end{defn}

For a Boolean network $S$ with subsystems $S_1,\ldots, S_{n}$, by convention, $\Asmp$ denotes an assumption with domain $V_{e} = \prod_{j=1}^{n} V_{e_j^{ext}}$. Furthermore,  
$\Asmp^{\downarrow(i)}$ 
denotes an assumption associated with subsystem $S_i$ with domain $V_{e_i^{ext}}$. Note that for a given assumption $\Asmp$, its projection $\Asmp|_{E^{ext}_i}$ can always be denoted by $\Asmp^{\downarrow(i)}$ as it only contains variables from $E_i$. Similarly, $\Gar$ denotes a guarantee with domain $V_{y} = \prod_{j=1}^{n} V_{y_j}$ and $\Gar^{\downarrow(i)}$ denotes a guarantee associated with subsystem $S_i$ with domain $V_{y_i}$. 

\begin{defn}
	A formula of the form
	\beq\label{eq:formula_global}
	\varphi\doteq \bigwedge_{k=1}^{n_c}(\asmp{k}{}\to\gar{k}{})
	\eeq
	is called a \emph{global contract} and denoted by $\mathbb{C} \doteq \{[A_k,G_k]\}_{k=1}^{n_c}$. 
\end{defn}

 \begin{defn}
 	A formula of the form
	\beq
	\varphi^{(i)}\doteq \bigwedge_{k=1}^{n_c}(\asmp{k}{\downarrow(i)}\to\gar{k}{\downarrow(i)})
	\eeq
 	is called a \emph{local contract} for $S_i$ and denoted by $\mathbb{C}_i \doteq \{[\asmp{k}{(i)},\gar{k}{\downarrow(i)}]\}_{k=1}^{n_c}$. 
 \end{defn}
 
\subsection{Control protocol synthesis}
Given a system $S$, a \emph{control protocol} $\pi: V_{e} \rightarrow V_u$ maps the environment inputs to control inputs. When a control protocol $\pi$ is implemented on a system $S$, \emph{controlled system} is governed by the following input-output relation:
\beq \label{eq:sys_sr}
y = f(\pi(e),e),
\eeq
since $u= \pi(e)$.

Given an assumption-guarantee pair, the control synthesis problem aims to find a protocol $\pi$ that sets the control inputs such that $A \rightarrow G$ is satisfied. For any $e \in \llbracket A \rrbracket$, determining if there exist a $u \in V_u$ so that $\Gar$ evaluates $True$, is a quantified Boolean SAT problem: 
\begin{equation}\label{eq:qsat}
\forall e \in V_e: \exists u \in V_u : \Asmp(e) \to \Gar(y)
\end{equation}
where $y$ is given as in the Equation \eqref{eq:y=fue}.
If the quantified SAT problem above can be solved, then the set of all solutions can be used as the control protocol. In this case, we say $A \rightarrow G$ is \emph{realizable}.

\section{Problem Statement}

We are interested in synthesizing distributed controllers for a Boolean network to achieve a common goal while each local controller has access only to its own inputs. The general form of the problem can be formally stated as follows:

\begin{Problem}\label{prob:1}
Given a Boolean network $S = \langle \{S_i\}_{i=1}^n, \mathcal I \rangle$ and a Boolean formula $\varphi$ over the input and output variables $\bigcup_{i=1}^n(E_i\cup Y_i)$, find local controllers $\pi_i: V_{e_i} \to V_{u_i}$ such that when these local controllers are implemented together, the controlled system satisfies $\varphi$ for all possible inputs.  
\end{Problem}

Next, we show that even in this relatively simple Boolean setting, the distributed synthesis problem is quite hard. The main difficulty arises from the fact that each local controller only has partial information to base their decisions on. In order to show the hardness of the problem, we reduce dependency quantifier boolean formula game (DQBFG) to Problem \ref{prob:1}. Let us first introduce DQBFG \cite{hearn2009games}.

DQBFG is a three player game, consisting of players $B$ (black), $W1$ (white 1) and $W2$ (white 2). An instance of DQBFG is a Boolean formula $\varphi$ in variables $X_1\cup X_2 \cup Z_1 \cup Z_2$. For each $i\in \{1,2\}$, player $Wi$ has access only to the variables in $X_i\cup Z_i$. First, player $B$ chooses an assignment for variables in $X_1\cup X_2$. Then, player $W1$ chooses an assignment for variables in $Z_1$ and player $W2$ chooses an assignment for variables in $Z_2$. Note that the order of the decisions made by $W1$ and $W2$ does not matter since they do not see each other's decisions. White team wins if $\varphi$ is true in the end. The decision problem DQBFG is whether a winning strategy for the white team in this game exists. DQBFG belongs to the complexity class NEXPTIME-complete \cite{hearn2009games}.


\begin{Theorem}\label{theo:hardness} The decision version of Problem~\ref{prob:1} is NEXPTIME-complete.
\end{Theorem}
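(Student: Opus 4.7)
I would split the proof into NEXPTIME-hardness, using the DQBFG reduction already hinted, and NEXPTIME membership, via guess-and-verify.

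For hardness, I plan to carry out the reduction from DQBFG as follows. Given a DQBFG instance with variable sets $X_1, X_2, Z_1, Z_2$ and formula $\varphi$, construct a Boolean network $S$ with two disjoint subsystems $S_i = \langle U_i, E_i, Y_i, f_i \rangle$, $i \in \{1,2\}$, by identifying $E_i$ with $X_i$, $U_i$ with $Z_i$, $Y_i$ with $U_i$, and taking $f_i$ to be the identity. The interconnection structure is empty, so $\mathcal{G}_S$ is trivially a DAG and both well-posedness conditions hold vacuously. Take the formula of Problem~\ref{prob:1} to be $\varphi$ itself, which ranges over $E_1 \cup E_2 \cup Y_1 \cup Y_2 \equiv X_1 \cup X_2 \cup Z_1 \cup Z_2$ as required. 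A pair of local controllers $\pi_i : V_{e_i} \to V_{u_i}$ is then syntactically a joint white-team strategy, since each $\pi_i$ observes only its $X_i$-variables and produces a $Z_i$-assignment. Unrolling $\exists \pi_1, \pi_2 \; \forall e \; \varphi(e, \pi(e))$ by Skolemization recovers exactly $\forall X_1 X_2 \; \exists Z_1(X_1)\, \exists Z_2(X_2)\; \varphi$, so the constructed instance is solvable iff the white team has a winning strategy. The reduction is manifestly polynomial.

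For membership in NEXPTIME, observe that each local controller $\pi_i$ can be stored as a truth table of size $|V_{e_i}| \cdot \lceil \log |V_{u_i}| \rceil$, which is at most exponential in the input size $m$. A non-deterministic procedure first guesses the tuple $(\pi_1, \ldots, \pi_n)$ in time $O(2^{\mathrm{poly}(m)})$, and then verifies that for every $e \in V_e$ (of which there are at most $2^{\mathrm{poly}(m)}$) the resulting valuation $y = f(\pi(e),e)$ satisfies $\varphi$. Each check is polynomial in $m$ and in the guessed tables, so the entire procedure runs in non-deterministic exponential time.

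The main obstacle, I expect, is not the algorithmic content but the careful bookkeeping: verifying that the reduction truly respects the well-posedness restrictions on Boolean networks, and arguing that $\pi_i$'s intrinsic dependence only on $E_i$, together with the lack of interconnections between $S_1$ and $S_2$, faithfully enforces the partial-information constraint of DQBFG. Once these invariants are nailed down, correctness of both directions follows from matching the quantifier structures.
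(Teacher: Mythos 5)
Your proposal is correct and follows essentially the same route as the paper: NEXPTIME-hardness via the same two-subsystem, no-interconnection encoding of DQBFG (your choice of $Y_i = U_i$ with $\varphi$ over $E_i \cup Y_i$ is an immaterial variant of the paper's $Y_i = E_i \cup U_i$), and membership by guessing exponential-size controller truth tables and verifying all environment valuations, which is the rigorous phrasing of the paper's enumeration count $\mathcal{O}(2^{\sum_i |U_i| 2^{|E_i|}})$. No gaps to flag.
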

\begin{proof}
	We prove our claim by showing that DQBFG is a special case of our problem. 
	We can imagine environment taking over the role of $B$ in our setting. Boolean network $S$ consists of two subsystems $S_1,S_2$ associated with $W1$ and $W2$ respectively, where 
	\begin{itemize}
		\item $\mathcal{I} \doteq \{I_{1,1},I_{1,2}, I_{2,1}, I_{2,2}\}$ with $I_{i,j} \doteq \emptyset$ for all $i,j = 1,2$, 
		\item $E_i \doteq X_i$ and $U_i \doteq Z_i$ for $i=1,2$,
		\item $Y_i \doteq E_i\cup U_i = X_i\cup Z_i$ and the system function is the identity map for each subsystem, i.e., $f_i: (u_i,e_i) \mapsto (u_i,e_i)$ 
		for $i=1,2$.
	\end{itemize}
	The Boolean formula $\varphi'$ is obtained by replacing each variable $X_i$ with $Z_i$ in $\varphi$ with the corresponding variables in $Y_i$.
	Then, the decision version of Problem ~\ref{prob:1}, that is, verifying the existence of local controllers to satisfy a Boolean function $\varphi'$ for the network $S$ as defined above, is equivalent to DQBFG. This proves NEXPTIME-hardness.
	
	Now we show that Problem~\ref{prob:1} is in NEXPTIME. A naive algorithm for Problem~\ref{prob:1} that enumerates all possible local controller tuples $\pi^i$ for $i=1,\ldots, n$ and checks the satisfaction of
	$\varphi$ runs in NEXPTIME since there are $\mathcal{O}(2^{\sum_{i\in{\mathbb N}_n}|U_i|2^{ |E_i|}})$ local controller tuples. 	
	Thus the complexity of the decision version of Problem~\ref{prob:1} is NEXPTIME-complete.
\end{proof}

Given the hardness result above, we seek to break the problem into more manageable pieces. We assume that $\varphi$ is given in the form of a global contract as in \eqref{eq:formula_global}. Note that this is without loss of generality as any $\varphi$ can be written in this form. We consider the following variant of the problem. 

\begin{Problem}\label{prob:2}
Given a Boolean network $S = \langle \{S_i\}_{i=1}^n, \mathcal I \rangle$, and a global contract $\mathbb{C} = \{[\asmp{k}{},\gar{k}{}]\}_{k=1}^{n_c}$, for $i=1,\ldots,n$, find local contracts of the form $\mathbb{C}_i= \{[\asmp{k}{\downarrow(i)},\gar{k}{\downarrow(i)}]\}_{k=1}^{n_c(i)}$ and local controllers  $\pi_i: V_{e_i} \to V_{u_i}$, satisfying these local contracts,  such that when these local controllers are implemented together, the controlled system satisfies $\mathbb{C}$ for all possible inputs.   
\end{Problem}

Note that Problem \ref{prob:2} is essentially equivalent to Problem \ref{prob:1}. However, if the structure of the problem allows the local contracts $\mathbb{C}_i$ to be computed efficiently, then obtaining a local controller from a local contract is a full information synthesis problem that can be solved via QSAT. Complexity of QSAT is PSPACE-complete, which is lower in the complexity hierarchy than NEXPTIME-complete \cite{papadimitriou2003computational} . Moreover, there are off-the-shelf highly optimized tools for solving QSAT problems making their solution practically feasible. In what follows, we provide a sound algorithm to solve Problem \ref{prob:2}. Some structural properties of $\mathbb{C}$ and the interconnection structure of $S$ that render the algorithm complete are also discussed.

\section{Algorithm}

In this section, we impose a certain structure on 
the specification $\varphi$ so that Problem~\ref{prob:2} can be reduced to a  
number of QSAT problems. In particular, we let the specification $\varphi$ be given by a single assumption-guarantee pair, $\mathbb{C} = [\Asmp,\Gar]$ and propose Algorithm~\ref{alg:main} to solve for this case. 

An overview of the algorithm is as follows.  Let $S_i$ be a subsystem that is a leaf node of the system graph. We first compute a local assumption $\Asmp^{\downarrow(i)}$ (line 3) and a guarantee $\Gar^{\downarrow(i)}$ (line 4). Then, we synthesize the local controller, which is a full information synthesis problem that can be solved by QSAT. If the local assumption-guarantee pair is unsatisfiable, we constrain the internal inputs of $S_i$ to make it realizable using the so called \emph{least restrictive assumptions} (line 6). Since internal inputs of $S_i$ are outputs of other subsystems, this assumption becomes a guarantee the rest of the systems must fulfill (line 8). Once we synthesize the controller, we discard $S_i$ from the system graph and we pick another leaf node $S_j$ and continue in the same manner until every subsystem has a local controller. Different subroutines of the algorithm are explained in detail next.

\begin{algorithm} 
	\caption{Distributed\_Synthesis($S,\mathbb{C}$)} \label{alg:main}
	\begin{algorithmic}[1]
		\small
			\Require{A Boolean network $S$ and global contract $\mathbb{C} = [A , G]$} 
					\If{$S$ is empty}
		\Return $True$
		\EndIf
		\State Let $S_{j}$ be a leaf node of $\mathcal{G}_S$
		\State $ \Asmp^{\downarrow(j)} =A|_{E^{ext}_j}$ 
		\State $\Gamma$ = distribute\_guarantee($\Gar,S_j$)
		\For{$\gamma \in \Gamma$}
		\State $(\lra^{(j)}_{lra}, \pi_j)$ = find\_lra($S_{j},A^{\downarrow(j)}, G^{\downarrow(j)}$)
		\If{$\lra^{\downarrow(j)}_{lra}$ is $False$}
		\Return $False$
		\EndIf
		\State $\mathbb{C}' = \left[A , \left( G^{\uparrow(j)}\wedge \lra^{(j)}_{lra}\right)\right]$
		\State $S'$ = delete($S,S_j$)
	
\State flag = Distributed\_Synthesis($S',\mathbb{C}'$)
\If{flag}
\Return $True$
\EndIf
\EndFor\\
\Return $False$

	\end{algorithmic}
\end{algorithm}

\subsection{Finding local assumptions}
Local assumptions are simply found by projection as shown in line 3 the Algorithm \ref{alg:main}. By definition of the projection operator, we have $\Asmp^{\downarrow(i)}$ depend only on local external variables. Moreover, projection ensures that local assumptions do not restrict the environment more than the global assumption and does so in the ``best" possible way. In particular, we have the following property that will be useful in proving the soundness of the algorithm.
\begin{prop}\label{theo:as}
	Let $A : \prod_{i=1}^{n} V_{e^{ext}_i} \rightarrow \mathbb{B}$ be the global assumption. 
	Define $S_i$'s local assumption $\Asmp^{\downarrow(i)} \doteq \Asmp|_{V_{e_i^{ext}}}$ for $i \in \mathbb{N}_n$. Then local assumptions are less restrictive than the global assumption, i.e.,
	\begin{equation}\label{eq:asmp}
	\llbracket A \rrbracket \subset \prod_{i=1}^n \llbracket A^{\downarrow(i)}\rrbracket.
	\end{equation}
	Moreover, \eqref{eq:asmp} fails to hold if we replace any ${A}^{\downarrow(i)}$ with $\bar{A}^{\downarrow(i)}$ where $\llbracket \bar{A}^{\downarrow(i)}\rrbracket \subset \llbracket A^{\downarrow(i)}\rrbracket$.
\end{prop}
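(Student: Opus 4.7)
The plan is to establish both parts directly from the definition of the projection operator given earlier in the paper. For the containment \eqref{eq:asmp}, I would pick an arbitrary admissible tuple $e = (e_1^{ext}, \ldots, e_n^{ext}) \in \llbracket A \rrbracket$ and show coordinate-wise that $e_i^{ext} \in \llbracket A^{\downarrow(i)} \rrbracket$ for every $i \in \mathbb{N}_n$. Fixing $i$, the membership of $e_i^{ext}$ in $\llbracket A|_{E_i^{ext}}\rrbracket = \llbracket A^{\downarrow(i)}\rrbracket$ requires, by the projection definition, some completion on the remaining indices landing inside $\llbracket A \rrbracket$; the completion $(e_j^{ext})_{j \neq i}$ already carried by $e$ itself serves as such a witness. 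Ranging over $i$ places $e$ in the Cartesian product, proving the inclusion.

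For the minimality statement, I would proceed by contradiction. Suppose that for some index $i$ we replace $A^{\downarrow(i)}$ by $\bar{A}^{\downarrow(i)}$ with $\llbracket \bar{A}^{\downarrow(i)}\rrbracket \subsetneq \llbracket A^{\downarrow(i)}\rrbracket$, while keeping all other local assumptions intact, and nevertheless the inclusion \eqref{eq:asmp} still holds. Pick any $\hat{e}_i^{ext} \in \llbracket A^{\downarrow(i)}\rrbracket \setminus \llbracket \bar{A}^{\downarrow(i)}\rrbracket$, which exists by strict inclusion. By the definition of projection applied to $A^{\downarrow(i)}$, there is a completion $(\hat{e}_j^{ext})_{j \neq i}$ such that the full tuple $\hat{e}$ lies in $\llbracket A \rrbracket$. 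However, the $i$-th coordinate of $\hat{e}$ does not lie in $\llbracket \bar{A}^{\downarrow(i)}\rrbracket$, so $\hat{e}$ is outside the modified product on the right-hand side of \eqref{eq:asmp}, contradicting the assumed inclusion.

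I do not expect a serious obstacle; both claims amount to unwinding the existential quantifier in the definition of projection. The only bookkeeping subtlety occurs in the second part: because the other factors $A^{\downarrow(j)}$ for $j \neq i$ are retained unchanged, the witness $\hat{e}$ is excluded from the product solely by the strengthened $i$-th factor, which is precisely what makes the contradiction go through. If a weaker version of the claim (e.g., replacing several factors simultaneously) were sought, the same construction would apply verbatim to any single strengthened factor.
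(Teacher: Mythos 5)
Your proposal is correct and follows essentially the same route as the paper's proof: the inclusion is obtained by using the admissible tuple itself as the witness in the projection definition, and minimality by taking a point in $\llbracket A^{\downarrow(i)}\rrbracket \setminus \llbracket \bar{A}^{\downarrow(i)}\rrbracket$, completing it to an element of $\llbracket A \rrbracket$ via projection, and noting it escapes the strengthened product. The only difference is cosmetic (you phrase the second part as a contradiction, the paper states it directly), and your write-up in fact states the final non-membership more cleanly than the paper's text does.
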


\begin{proof}
	Assume that 
	$(e_1^{ext},\dots,e_n^{ext}) \in \llbracket A \rrbracket $.
	By construction, 
	$e_i^{ext} \in \llbracket A^{\downarrow(i)} \rrbracket \text { for all } i \in \mathbb{N}_n$.
	Then 
	$
	(e_1^{ext},\dots,\dots,e_n^{ext}) \in  \prod_{i=1}^n \llbracket A^{\downarrow(i)} \rrbracket.
	$
	This implies \eqref{eq:asmp}.
	
	For any arbitrary $i \in \mathbb{N}_n$, let $\bar{A}^{\downarrow(i)}$ be a Boolean-valued function satisfying $\llbracket \bar{A}^{\downarrow(i)}\rrbracket \subset \llbracket A^{\downarrow(i)}\rrbracket$ and $\llbracket \bar{A}^{\downarrow(j)}\rrbracket = \llbracket A^{\downarrow(j)}\rrbracket$ for every other $j\neq i$. Now let $e^{ext}_i \in \llbracket A^{\downarrow(i)}\rrbracket \setminus \llbracket \bar{A}^{\downarrow(i)}\rrbracket$. By construction of $A^{\downarrow(i)}$, there exists an environment valuation $(e_1^{ext},\dots,e^{ext}_i,\dots,e_n^{ext}) \in \llbracket A \rrbracket $. However, $(e_1^{ext},\dots,e^{ext}_i,\dots,e_n^{ext}) \in \prod_{j=1}^n \llbracket \bar{A}^{\downarrow(j)} \rrbracket $. Thus \eqref{eq:asmp} is no longer true.
\end{proof}

\subsection{Finding local guarantees}
As opposed to the local assumptions, local guarantees $\Gar^{\downarrow(i)}$ cannot be less restrictive than the global guarantee. No communication is assumed between subsystems, hence any possible combination of outputs allowed by local guarantees should be in the satisfying set of global guarantee, i.e.,
\beq\label{eq:dist}
\forall i \in \mathbb{N}_n: \forall y_i \in \ws{\Gar^{\downarrow(i)}}:  (y_1,\dots,y_n)  \in \ws{G}.
\eeq
 
Due to the dependence, local guarantees cannot be computed independently. Our algorithm proceeds by selecting a local guarantee for a subsystem at each iteration, using the notion of a distribution, which is introduced next.

\begin{defn}
Let $G : \prod_{j=1}^n V_{y_i} \rightarrow \mathbb{B} $ be the global guarantee. A \emph{distribution} of $G$ between $S_i$ and the rest of the subsystems is a pair of Boolean valued functions $ \Gar^{\downarrow(i)}: V_{y_i} \to \mathbb{B}$ and $ \Gar^{\uparrow(i)}:\prod_{j \neq i} V_{y_j} \to \mathbb{B}$ where
\begin{equation}\label{eq:dist2}
\forall y_i \in \ws{\Gar^{\downarrow(i)}}: \forall y \in \ws{\Gar^{\uparrow(i)}}: (y_i,y)  \in \ws{G}.
\end{equation}
We denote a distribution as $\gamma = \{\Gar^{\downarrow(i)},\Gar^{\uparrow(i)}\}$.
\end{defn}

Unfortunately, in general, there is no unique ``best" way of generating local guarantees, therefore distributions are not unique. Note that 
$
 \llbracket\Gar^{\downarrow(i)}\rrbracket \times \ws{\Gar^{\uparrow(i)}}\subset \llbracket G\rrbracket
 $ 
is trivial by Eq.~\eqref{eq:dist2}. This implies that local guarantees are conservative and they under-approximate $\ws{G}$.
However, we do not want to restrict the system more than necessary. That is why we compute only the \emph{maximal distributions}. A distribution is called \emph{maximal} when there is no other distribution $\bar{\gamma} = \{\bar{\Gar}^{\downarrow(i)},\bar{\Gar}^{\uparrow(i)}\}$ that satisfy $\ws{\Gar^{\downarrow(i)}}\subset \ws{\Bar{\Gar}^{\downarrow(i)}}$ and/or $\ws{\Gar^{\uparrow(i)}}\subset \ws{\Bar{\Gar}^{\uparrow(i)}}$.
We denote the set of all maximal distributions with $\Gamma = \{\gamma_k\}_k$, which is computed in line 4 of the algorithm. We refer the reader to the appendix for more details on how to compute these distributions.

\subsection{Least restrictive assumptions and controller synthesis}\label{sec:32}

 Controller synthesis for any subsystem 
 is essentially a quantified SAT problem as stated in \eqref{eq:qsat}. 
However, if we let internal inputs to take any possible value, then it might not be possible to find an input $u_i$ that renders $G^{\downarrow(i)} = True$. 
When this is the case, we restrict the internal inputs, which are controlled by $S_i$'s ancestors, to a certain set to achieve $G^{\downarrow(i)}$. While doing so, we would like to be as permissive as possible.

 \begin{defn}
 	Given a local contract $\mathbb{C}_i = [\Asmp^{\downarrow(i)} , \Gar^{\downarrow(i)}]$, the set of all internal inputs that makes the contract realizable is called the \emph{least restrictive assumption}. It is denoted with $\lra^{(i)}_{lra}$ where 
 	\beq
 	\llbracket \lra^{(i)}_{lra} \rrbracket \doteq \{ e_i^{int} \in E_i^{int} \mid  \Asmp^{\downarrow(i)} \to \Gar^{\downarrow(i)} \text{ is realizable}\}.
 	\eeq 
\end{defn}
\vspace{2mm}
 
 In other words, the least restrictive assumption gives the set of internal inputs that makes the guarantee realizable. Any internal input outside of this set makes the guarantee unsatisfiable. 

After computing the least restrictive assumption, we update the local contract as
\beq
\mathbb{C}_i = \left[ \left(A^{\downarrow(i)} \wedge \lra^{\downarrow(i)}_{lra}\right),G^{\downarrow(i)} \right].
\eeq

Note that by definition of least restrictive assumption, $\mathbb{C}_i$ is realizable. Having a realizable local contract, the control protocol is synthesized by solving the respective QSAT problem. Line 6 in Algorithm \ref{alg:main} performs these two operations simultaneously.

On the other hand, the least restrictive assumption $\lra^{(i)}_{lra}$ imposes new guarantees to the ancestors of $S_i$. We change internal inputs with their output correspondents by examining the interconnection structure $\mathcal{I}$. Finally we update the global contract for the remaining subsystems as
 
\beq
\mathbb{C}' = [A ,( G^{\uparrow(i)} \wedge \lra^{(i)}_{lra})].
\eeq
 
 \subsection{Algorithm Analysis}
 
 In this section we show that solutions returned by Alg.~\ref{alg:main} are correct. Then we introduce conditions on the system graph and the specifications that renders the algorithm complete. Finally we discuss the complexity of the proposed method. 
 
 \begin{Theorem}[Soundness]
 	For arbitrary Boolean network $S$, if the specification is given with $\mathbb{C} =[\Asmp , \Gar]$, then Alg.~\ref{alg:main} is sound.
 \end{Theorem}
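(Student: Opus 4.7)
The plan is to proceed by induction on the number $n$ of subsystems in $S$. The base case, $n = 0$ (empty network), is immediate: Algorithm~\ref{alg:main} returns $True$ vacuously, no controllers are needed, and any global contract is trivially enforced.

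For the inductive step, assume soundness for networks with fewer than $n$ subsystems, and suppose Algorithm~\ref{alg:main} returns $True$ on input $(S, \mathbb{C} = [A,G])$. Then there must exist an iteration in which (i) a leaf node $S_j$ of $\mathcal{G}_S$ is selected in line~2, (ii) a maximal distribution $\gamma = \{G^{\downarrow(j)}, G^{\uparrow(j)}\}$ of $G$ is picked from $\Gamma$, (iii) find\_lra returns a satisfiable $\lra^{(j)}_{lra}$ together with a local controller $\pi_j$ realizing the local contract $[A^{\downarrow(j)} \wedge \lra^{(j)}_{lra},\, G^{\downarrow(j)}]$, and (iv) the recursive call on $(S', \mathbb{C}')$ with $S' = S \setminus \{S_j\}$ and $\mathbb{C}' = [A,\, G^{\uparrow(j)} \wedge \lra^{(j)}_{lra}]$ also returns $True$, producing local controllers $\{\pi_i\}_{i \neq j}$.

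Now fix any admissible $e \in \ws{A}$, implement all synthesized controllers on $S$, and let $y$ denote the resulting output valuation. First, Proposition~\ref{theo:as} gives $e_j^{ext} \in \ws{A^{\downarrow(j)}}$. Second, because $S_j$ is a leaf of $\mathcal{G}_S$ its outputs do not feed any other subsystem, so the trajectory of each $S_i$ with $i \neq j$ is identical whether the network is $S$ or $S'$; the inductive hypothesis applied to $(S', \mathbb{C}')$ therefore yields $(y_i)_{i \neq j} \in \ws{G^{\uparrow(j)} \wedge \lra^{(j)}_{lra}}$. Third, by the interconnection structure $\mathcal{I}$, the internal inputs $e_j^{int}$ of $S_j$ are exactly the pertinent outputs of its ancestors, so the conjunct $\lra^{(j)}_{lra}$ translates to $e_j^{int} \in \ws{\lra^{(j)}_{lra}}$. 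Combined with the first observation, the premise $A^{\downarrow(j)} \wedge \lra^{(j)}_{lra}$ of the local contract of $S_j$ is met, so by the correctness of the QSAT solution used in find\_lra we have $y_j \in \ws{G^{\downarrow(j)}}$. Finally, the defining property of a distribution (Eq.~\eqref{eq:dist2}) gives $(y_j,(y_i)_{i\neq j}) \in \ws{G}$, establishing the implication $A \to G$ for the whole network.

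The main obstacle will be the careful bookkeeping that the least restrictive assumption $\lra^{(j)}_{lra}$, originally defined over $S_j$'s internal inputs $E_j^{int}$, is correctly re-interpreted as a guarantee over the outputs of $S_j$'s ancestors when it is appended to form $\mathbb{C}'$. This requires a clean treatment of the implicit renaming along shared variables in $\mathcal{I}$. A secondary subtlety is that a leaf of $\mathcal{G}_S$ may still have incoming edges, so one must verify that every internal input of $S_j$ is in fact an output of some subsystem in $S'$, and that the inductive hypothesis, although stated over $\mathbb{C}'$, supplies precisely the constraint assumed when synthesizing $\pi_j$.
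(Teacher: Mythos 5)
Your proof is correct, but it takes a genuinely different route from the paper's. The paper argues statically, at the level of Boolean algebra: it observes that $\bigwedge_i(\Asmp^{\downarrow(i)}\to\Gar^{\downarrow(i)})$ implies $\bigwedge_i\Asmp^{\downarrow(i)}\to\bigwedge_i\Gar^{\downarrow(i)}$, then invokes Proposition~\ref{theo:as} (projection only weakens assumptions) and the distribution property \eqref{eq:dist} (local guarantees under-approximate $\ws{\Gar}$) to get $(\bigwedge_i\Asmp^{\downarrow(i)}\to\bigwedge_i\Gar^{\downarrow(i)})\to(\Asmp\to\Gar)$, and it disposes of the least restrictive assumptions in one informal sentence, noting they constrain outputs rather than the environment and hence only strengthen the chain. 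You instead induct on the recursion of Alg.~\ref{alg:main} and trace an actual execution: the recursive call certifies $\Gar^{\uparrow(j)}\wedge\lra^{(j)}_{lra}$ on the surviving subsystems, the leaf property ensures their behavior is unchanged by deleting $S_j$, the renaming along $\mathcal I$ converts the ancestors' guarantee $\lra^{(j)}_{lra}$ into exactly the internal-input assumption under which $\pi_j$ was synthesized, and Eq.~\eqref{eq:dist2} recombines $y_j$ with the rest. What your route buys is an explicit discharge of the $\lra^{(j)}_{lra}$ bookkeeping, which is arguably where the real content of soundness lies: the local contracts $[\Asmp^{\downarrow(i)}\wedge\lra^{(i)}_{lra},\Gar^{\downarrow(i)}]$ look circular at first glance, and your leaf-first induction (i.e., acyclicity of $\mathcal G_S$) is precisely what breaks the circularity, a point the paper leaves implicit. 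What the paper's route buys is brevity and independence from the order in which subsystems are peeled off. Two places to tighten, both of which the paper also glosses over: in $\mathbb{C}'$ the assumption is the full $\Asmp$, which may still mention $e_j^{ext}$, a variable absent from $S'$, so the inductive hypothesis must be read with $\Asmp$ projected onto the surviving external inputs (harmless, since the outputs of $S'$ do not depend on $e_j^{ext}$, and any $e\in\ws{\Asmp}$ restricts to an admissible valuation for $S'$); and your base case "any contract is trivially enforced by the empty network" tacitly assumes the residual guarantee reaching the empty network is the constant $True$, which is exactly what your inductive step's handling of the distributions and $\lra$ conjuncts is responsible for supplying.
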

 \begin{proof}
 	It is enough to show that local contracts results in weaker assumptions on environment inputs and stronger restrictions on system outputs.
		It can be shown using Boolean algebra that 
		\beq\label{eq:sound0}
		\left(\bigwedge_{i=1}^n \left(\Asmp^{\downarrow(i)} \to \Gar^{\downarrow(i)}\right)\right) \to \left(\bigwedge_{i=1}^n \Asmp^{\downarrow(i)} \to \bigwedge_{i=1}^n \Gar^{\downarrow(i)}\right)
		\eeq 
		is a tautology.
		Also from Eq.~\eqref{eq:asmp} and Eq.\eqref{eq:dist}, we can show that $A \to \bigwedge_{i=1}^n \Asmp^{\downarrow(i)}$ and $\bigwedge_{i=1}^n \Gar^{\downarrow(i)} \to G$ are tautologies. Therefore,
		\beq
				\left(\bigwedge_{i=1}^n \Asmp^{\downarrow(i)} \to \bigwedge_{i=1}^n \Gar^{\downarrow(i)} \right) \to \left(A\to G\right)
 		\eeq
is trivially true.

 		
 Also note that the least restrictive assumptions computed for each subsystem are not restrictions on environment. In fact they restrict the outputs, which only strengthens result presented above. Hence, if the algorithm returns a controller that satisfies the local contracts, the global contract is satisfied.
 		\end{proof}
 
 The completeness requires additional assumptions on the system graph and the specification.
 \begin{Theorem}[Completeness]\label{theo:comp} 
  Let the specifications given with the contract $\mathbb{C}=[\Asmp,\Gar]$ and the system graph induced by the Boolean system $S$ be $\mathcal{G}_S$.
  Alg.~\ref{alg:main} is complete if
  \begin{enumerate}
  	\item $\Asmp = \bigwedge_i A^{\downarrow(i)}$,%
  	\item $\Gar = \bigwedge_i G^{\downarrow(i)}$, and	\item $\mathcal{G}_S$ is a forest.
  	\end{enumerate} 
   \end{Theorem}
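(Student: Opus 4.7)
The plan is to argue by induction on $|\Node_S|$, with the base case $|\Node_S|=0$ handled trivially by the first line of the algorithm. For the inductive step, fix a leaf $S_j$ picked in line~2 and suppose a distributed realization $\{\pi^*_i\}_{i=1}^n$ of $\mathbb{C}=[\Asmp,\Gar]$ exists. I would first show that under hypothesis (2) the canonical distribution $\gamma^\star=(G^{\downarrow(j)},\bigwedge_{i\neq j}G^{\downarrow(i)})$ is a maximal distribution of $\Gar$ with respect to $S_j$, and is therefore enumerated in $\Gamma$: maximality holds because enlarging either component by a single output valuation outside $\ws{G^{\downarrow(j)}}$ or outside $\ws{\bigwedge_{i\neq j}G^{\downarrow(i)}}$ immediately produces a global output tuple outside $\ws{\Gar}$, contradicting \eqref{eq:dist2}.

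Next I would establish two facts about the iteration that uses $\gamma^\star$. First, by hypothesis (1) and Proposition~1, $\Asmp^{\downarrow(j)}=\Asmp|_{E_j^{ext}}$ neither tightens nor loosens the portion of $\Asmp$ relevant to $S_j$, so the local QSAT in line~6 is satisfiable for exactly those internal inputs in $\ws{\lra^{(j)}_{lra}}$. Second, because $\pi^*_j$ witnesses realizability of $[\Asmp^{\downarrow(j)},G^{\downarrow(j)}]$ for the internal input that it actually receives under the global realization, that input lies in $\ws{\lra^{(j)}_{lra}}$ by definition of the least restrictive assumption. Combining these two observations, the restricted family $\{\pi^*_i\}_{i\neq j}$ realizes the reduced contract $\mathbb{C}'=[\Asmp,G^{\uparrow(j)}\wedge \lra^{(j)}_{lra}]$ on the reduced network $S'=S\setminus\{S_j\}$, and $S'$ inherits the forest property from $\mathcal G_S$.

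The main obstacle is that $\mathbb{C}'$ need not satisfy hypothesis~(2): $\lra^{(j)}_{lra}$ is a joint constraint on outputs of every parent of $S_j$, and in general it does not factor as a per-subsystem conjunction. This is where the forest hypothesis does the real work. Since $\mathcal{G}_S$ is a forest, any two parents of $S_j$ have no common ancestor, so after removing $S_j$ they lie in pairwise disjoint subtrees of $\mathcal G_{S'}$. I therefore plan to strengthen the induction hypothesis so that the guarantee is a conjunction of constraints, each supported on a single tree of the forest (rather than a single subsystem); this class is closed under leaf elimination because $\lra^{(j)}_{lra}$ is supported on the tree containing $S_j$. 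The delicate verification is checking that this broader invariant still lets Proposition~1's lossless projection argument go through and does not alter the maximality argument for distributions, after which the inductive hypothesis applied to $(S',\mathbb{C}')$ completes the proof.
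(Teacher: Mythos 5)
Up to your final step, your argument tracks the paper's proof: the same leaf-elimination induction, the observation that under condition (2) the canonical split $(G^{\downarrow(j)},\bigwedge_{i\neq j}G^{\downarrow(i)})$ is the (unique) maximal distribution enumerated by line 5, the use of condition (1) to argue that any internal input actually produced under an admissible environment lies in $\ws{\lra^{(j)}_{lra}}$ (so $\lra^{(j)}_{lra}=False$ really certifies that no controller, centralized or distributed, exists), and the restriction of a hypothetical realization $\{\pi^*_i\}_{i\neq j}$ to the reduced contract $\mathbb{C}'$ -- in fact you make this last reduction more explicit than the paper does. Where you diverge is exactly where the paper closes the induction: the paper takes the deleted leaf to have a \emph{single} parent $S_j$ (this is how it uses the forest condition), so $\lra^{(j)}_{lra}$ involves outputs of that one parent only, one sets $(G')^{\downarrow(\text{parent})}=G^{\downarrow(\text{parent})}\wedge\lra^{(j)}_{lra}$, and $\mathbb{C}'$ is again of the form \eqref{eq:conj} on a smaller forest; no strengthened invariant is introduced.

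Your proposed strengthening -- ``the guarantee is a conjunction of constraints each supported on a single tree of the forest'' -- does not close the gap you identified. First, it is not preserved by leaf elimination in precisely the multi-parent case that motivates it: deleting $S_j$ disconnects its parents into different trees of $\mathcal{G}_{S'}$, while $\lra^{(j)}_{lra}$ in general couples their outputs, so the new conjunct is not supported on any single tree of the reduced forest. Second, even where it is well defined the invariant is too weak for completeness: Example~\ref{ex:gar} (the contract $[True,\,y_1\lor y_2]$ over the single tree $S_1\to S_2$) is exactly a tree-supported, non-factoring guarantee for which Alg.~\ref{alg:main} fails although a distributed controller exists, so an induction hypothesis of that shape is false, not merely delicate. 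If you insist on allowing leaves with several pairwise-unconnected parents (which the paper's literal definition of tree/forest does not exclude, though its proof and Example~\ref{ex:forest} implicitly read the condition as every node having at most one parent), you would need a different closing argument -- e.g., that after deletion the coupled parents lie in disconnected components with factored assumptions, so one of the maximal distributions tried by the for-loop must succeed -- which is a genuinely different and more involved route than the paper's. As written, the last step of your proof is missing/incorrect, while the single-parent reading makes the paper's one-line bookkeeping suffice.
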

 In other words the specification is given as
\beq \label{eq:conj}
\mathbb{C} = \left[\bigwedge_i A^{\downarrow(i)} , \bigwedge_i G^{\downarrow(i)}\right].
\eeq

 \begin{proof}
We prove by induction that if Alg.~\ref{alg:main} fails to return a protocol then there does not exist one.
 
When the number of subsystems is one, centralized and distributed algorithms are identical. If the respective QSAT problem is not satisfiable, then there does not exist any control protocol to achieve the task. 

Assume that the Alg.~\ref{alg:main} is complete for an arbitrary forest $\bar{\mathcal{G}}_S$ with $n$ nodes and any specification $\bar{\varphi}$ given in the form of Eq.~\eqref{eq:conj}.

Also let ${\mathcal{G}}_S$ be an arbitrary forest with $(n+1)$ nodes and let 
\begin{equation}
{\mathbb{C}} =  \left[ \bigwedge_{i=1}^{n+1} {A}^{\downarrow(i)} ,  \bigwedge_{i=1}^{n+1} {G}^{\downarrow(i)}\right]
\end{equation}
be its specification.
Without loss of generality, assume that $S_{(n+1)}$ is a leaf node and the local contract that is computed according to Alg.~\ref{alg:main} is given as
$$\mathbb{C}_{(n+1)} = \left[\left({A}|_{V_{E_{n+1}^{ext}}} \wedge {\lra}^{(n+1)}_{lra}\right), {G}^{\downarrow(n+1)} \right].$$

First assume that ${\lra}^{(n+1)}$ is $False$ and the local contract $\mathbb{C}_{(n+1)}$ is not satisfiable. This means that there exists at least one admissible environment valuation $e_{(n+1)}^{ext}$ such that no matter what the internal inputs are, ${G}^{\downarrow(n+1)}$ is unsatisfiable. Since the environment inputs are uncontrolled, no control protocol can overcome this problem. Thus no distributed (or central) control protocol exists for the given system and specifications. 

Now assume that ${\lra}^{(n+1)}$ is not $False$ and the local contract $\mathbb{C}_{(n+1)}$ is satisfiable. Then the global contract is updated as 

\begin{equation}\label{eq:update}
{\mathbb{C}}' =  \left[{A} , \left({G}^{\uparrow(n+1)}_{lra}\wedge {\lra}^{\downarrow(n+1)}_{lra}\right)\right].
\end{equation}

Note that distribution is unique and ${G}^{\uparrow(n+1)}_{lra} = \bigwedge_{i \neq n+1} {G}^{\downarrow(i)}$.
Let $S_j$ denote the parent of $S_i$. Then ${\lra}^{(n+1)}_{lra}$ is an additional guarantee that involves variables only from $S_j$. 
Now define $({G}')^{\downarrow(i)} = {G}^{\downarrow(i)}$ for $i \neq j$ and  $({G}')^{\downarrow(j)} = {G}^{\downarrow(j)} \wedge {\lra}^{(n+1)}_{lra}$. Then we can write

\begin{equation}\label{eq:comp2}
\mathbb{C}' = \left[\bigwedge_{i=1}^{n} ({A}')^{\downarrow(i)} , \bigwedge_{i=1}^{n} ({G}')^{\downarrow(i)}\right]
\end{equation}
where $A' = A|_{\prod_{j \neq i}^{n}}$.
Now we are left with an arbitrary forest with $n$ nodes and a specification given in the form of Eq.~\eqref{eq:conj}. Thus the Alg.~\ref{alg:main} is complete.
 \end{proof}
 \vspace{2mm}
 
 
 
\begin{rem}
	A few remarks on the complexity of the algorithm are in order. In the worst case, the proposed algorithm requires solving $\mathcal{O}(n 2^{\Sigma_i|Y_i|+\max_i{|E_i|}})$ SAT problems each with complexity $\mathcal{O}(2^{\max_i |U_i|})$, where $n$ is the number of subsystems. Under the assumptions of Theorem \ref{theo:comp}, it  reduces to $\mathcal{O}(n 2^{\max_i{|E_i|}})$ since the distribution can simply be computed using projection and is unique. Recall that the complexity of the general problem is $\mathcal{O}(2^{\sum_{i\in{\mathbb N}_n}|U_i|2^{ |E_i|}})$; therefore with the assumed structure, one of the exponents is eliminated.
\end{rem}
 
%
%
%

\section{Illustrative Examples}\label{sec:ex}

\begin{figure}[ht]
  \centering
      \subfigure{\includegraphics [width=0.35\textwidth] {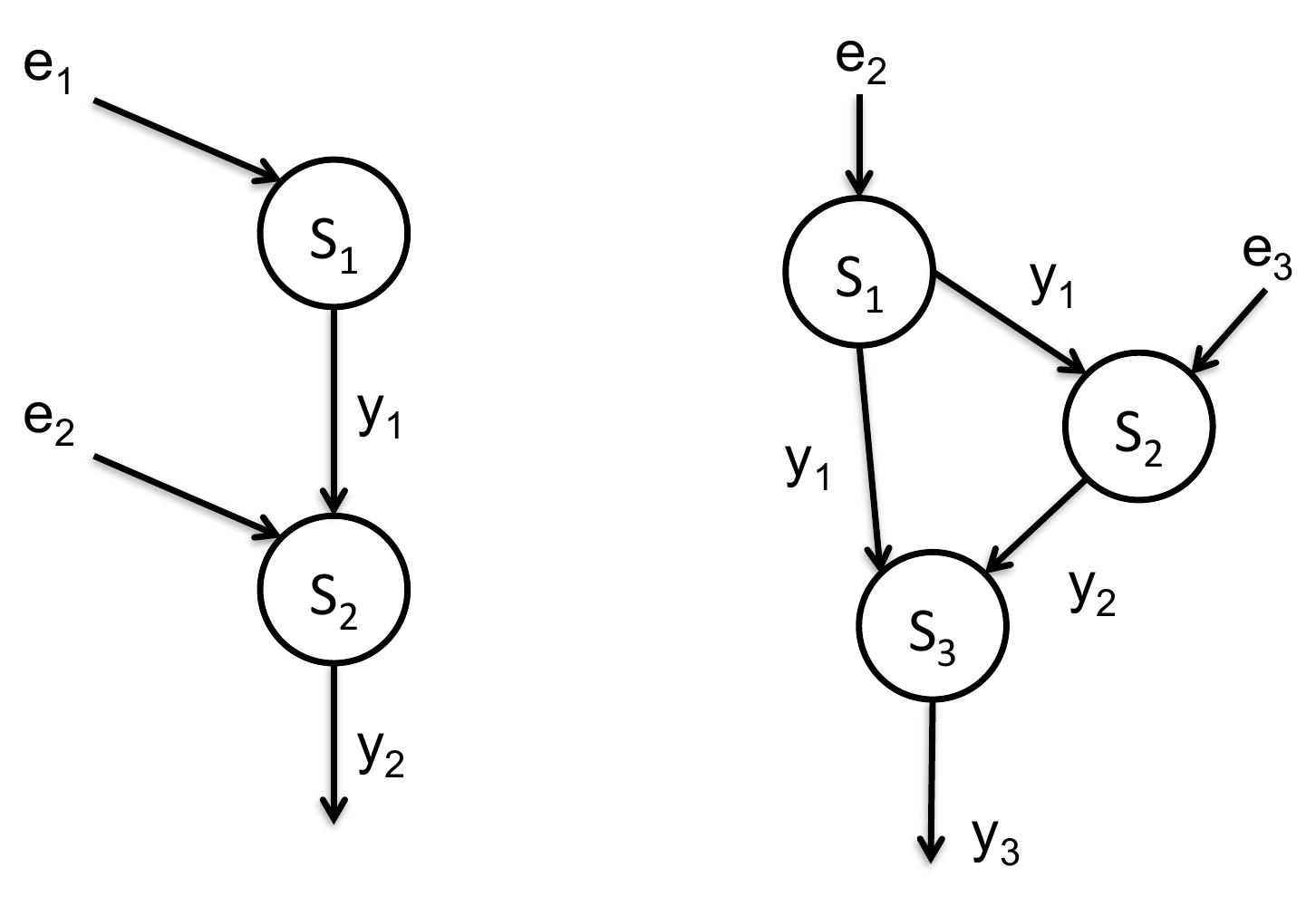}} 
  \caption{(Left): A system graph satisfying the forest condition. (Right): System graph where forest condition is violated.}\label{fig:ex1}
\end{figure}

In this section we present examples to illustrate the proposed method. We use Algorithm~\ref{alg:main} to solve Example \ref{ex:signaling} which satisfies all conditions for completeness.
  
\begin{example}\label{ex:signaling}
		For the system graph seen in Fig.~\ref{fig:ex1} (Left), let the global contract be $\mathbb{C} = [e_1 , y_2]$, where
\begin{align*}
f_1 &: (e_1,u_1) \mapsto u_1\\
f_2&: (e_2,y_1,u_2) \mapsto ((e_2\lor y_1) \wedge u_2).
\end{align*}
\end{example}

In order to synthesize local controllers, we start by finding the local assumption for $S_2$. Using Boolean projection, it can be found as	
$$ \Asmp^{\downarrow(2)} = \Asmp|_{V_{e^{ext}_2}} = True.$$

Note that $e_2$ can take any possible value because the global contract does not put any restrictions on $S_2$'s environment inputs. This implies that the local assumption is trivially true.
Then we distribute the guarantee $\Gar = y_2$ and compute the distribution. In this example guarantee depends only on $S_2$'s inputs, thus 
the solution is unique and it can be written as
$$\Gar^{\downarrow(2)} = y_2 ,\quad \Gar^{\uparrow(2)} = True. $$
Having local assumption-guarantee pair, we compute the least restrictive assumption as $\lra_{lra}^{(2)} = y_1.$
		
By definition, as long as $\lra_{lra}^{(2)}$ is satisfied by its ancestors, $S_2$ can satisfy its guarantee under all admissible environment conditions. Note that $\lra_{lra}^{(2)}$ introduces an additional guarantee for the remaining subsystems, i.e. $S_1$.

Upon synthesizing a local controller for $S_2$, we update the global contract $\mathbb{C}' = [e_1,y_1].$

Now we delete $S_2$ from the system graph and continue with $S_1$. 
First, the local assumption and guarantee are computed:
\begin{align*}
\Asmp^{\downarrow(1)} &= \Asmp|_{V_{e^{ext}_1}} = e_1 
&\Gar^{\downarrow(2)} &= \lra_{lra}^{(2)} = y_1
\end{align*}		
Note that the local contract is satisfiable. Thus we can synthesize a local controller for $S_1$ by solving the respective QSAT problem. 

Now we provide three examples and show why we need the conditions on specifications and system graph.
To begin with, when the first condition in Theorem \ref{theo:comp} is satisfied, \eqref{eq:asmp} becomes an equality. 

In other words, Boolean projection does not result in information loss. When this condition is violated, some information is lost. This might lead to incomplete solutions as illustrated by the next example.

\begin{example}\label{ex:asmp}
	For the system graph seen Fig.~\ref{fig:ex1} (Left), let $\mathbb{C} = \left[e_1 \oplus e_2,y_2\right]$
	where 
	\begin{align*}
	f_1 &:(u_1,e_1) \mapsto (e_1 \land u_1)\\
	f_2 &: (u_1,e_2,y_1)\mapsto ((e_2 \lor y_1)\land u_2).
	\end{align*}
\end{example}

In this example, the assumption cannot be written as a conjunction of two local assumptions, hence the completeness conditions are violated. Let us apply the algorithm to see why this causes a problem. We start by finding local assumption-guarantee pair and the corresponding least restrictive assumption.
\begin{align*}
	A^{\downarrow(2)} &= A|_{V_{e_2^{ext}}} = True, &G^{\downarrow(2)} &= y_2,\\ G^{\uparrow(2)} &= True,&\lra_{lra}^{(2)} &= y_1.
	\end{align*}
Then, we update the contract as
$\mathbb{C}' = [e1 \oplus e_2, y_1]$.
We delete $S_2$ and move onto $S_1$. Unfortunately the local assumption $A^{\downarrow(1)} = A|_{V_{e_1^{ext}}} = True$ together with $G^{\downarrow(1)} = y_1$ is unsatisfiable.

On the other hand, the following local contracts leads to local controllers satisfying the global specifications: $\mathbb{C}_2 = [e_2 \lor y_1,y_2]$,
$\mathbb{C}_1 = [e_1,y_1].$

When the second condition in Theorem \ref{theo:comp} is violated, $
\llbracket\Gar^{\downarrow(i)}\rrbracket \times \ws{\Gar^{\uparrow(i)}}\subset \llbracket G\rrbracket
$ is a strict subset relation. This implies that Boolean distribution is conservative. Next example shows how this results in incomplete solutions.
\begin{example}\label{ex:gar}
	For the system graph seen Fig, let $\mathbb{C} = \left[ True, y_1 \lor y_2\right]$ where 
	$$f_1 = (u_1,e_1) \mapsto e_1 \land u_1$$ 
	$$f_2 = (u_2,e_2,y_1) \mapsto (e_2 \lor u_2) \land \neg y_1$$
\end{example}

In this example, second condition is violated. 
Then the distribution is not unique. We have
$$
\gamma_1 =\{\Gar^{\downarrow(2)}_1 =y_2, \Gar^{\uparrow(2)}_1 = True\}$$
or
$$\gamma_2= \{\Gar^{\downarrow(2)}_2 = True, \Gar^{\uparrow(2)}_2 = y_1\}. 
$$

Using $\Gar^{\uparrow(2)}_1$ results in $\lra^{(2)} = y_1$. Then the local contract $\mathbb{C}_1 = {True,y_1}$ is not satisfiable.
Similarly the second option also forces $S_1$ to set $y_1 =True$. Thus again, resulting in an unsatisfiable local contract.

However, the following local contracts can be used to synthesize local controllers satisfying $\mathbb{C}$: $\mathbb{C}_2 = [\neg y_1,y_2]$ and
$\mathbb{C}_1 = [e_1,y_1]$.

Finally, if a subsystem has multiple parents, the least restrictive assumption might lead to violation of the second condition. This is demonstrated by the next example.

\begin{example}\label{ex:forest}
	For the system graph seen Fig.~\ref{fig:ex1} (Right), let $\mathbb{C} = \left[ True, y_3 \right]$ where 
	\begin{align*}
	f_1 &: (u_1,e_1) \mapsto (e_1 \land u_1)\\
	f_2 &: (u_2,e_2,y_1) \mapsto (e_2 \lor u_2) \land \neg y_1\\
	f_3 &: (u_3,y_2,y_1) \mapsto (y_1 \lor y_2).
	\end{align*}
\end{example}

In this example, after synthesizing a controller for $S_3$, global contract is updated to $\mathbb{C}' = [True,y_1 \lor y_2].$

Note that the remaining problem is equal to Example \ref{ex:gar}, thus the solution is not complete.

Finally, it is worth mentioning that it is easy to construct examples where there exists a distributed controller but some of the conditions in Theorem \ref{theo:comp} fail to hold but the proposed algorithm still finds a valid solution.

\section{Case Study}
In this section, 
we model electric power system of an aircraft as a Boolean network and and translate the specifications into an assumption-guarantee pair. Then we use Algorithm~\ref{alg:main} to synthesize distributed controllers.
Figure \ref{fig:largecircuit} (Left) shows a circuit for power generation and distribution in an aircraft in the form of a single-line diagram \cite{moir_aircraft_book}, a simplified notation for drawing three-phase power systems.
To model the system as a Boolean network, we first represent the electric power system topology by a digraph $\mathcal{G} = (\Node, \Edge)$ that indicates the power flow directions. The set $\Node$ of nodes  
in the graph consists of components such as generators
, rectifier units
, buses
, transformers 
 and dummy nodes where a collection of wires meet. The set $\Edge$ of edges contains contactors 
 and solid wire links between other components. We locate the maximal connected components of this graph; and treat each connected component as a Boolean subsystem. This results in the interconnection structure shown in Fig.~\ref{fig:largecircuit} (Right).  
 
In order to analyze the steady-state behavior of the electric power system in different failure modes, we use a discrete abstraction. The status of contactors 
can either be \emph{open} or \emph{closed}, i.e. Boolean $\{False,True\}$. Contactors are controllable (control inputs). Elements in the sets of generators 
and rectifier units 
and transformers 
are uncontrollable (environment inputs), and can take values of $True$ (i.e., the component is online and outputting the correct voltage), or $False$ (i.e., the component failed, no power output, open circuit). Power availability on a bus (outputs) depends on the status of generators, rectifier units, transformers and contactors. We say that there is a \emph{live path} between two components if there exists a simple path in the graph $\mathcal{G}$ that connects the two nodes corresponding to these components, there is no offline component along the path including end nodes, and the contactors along this path are all closed. The status of a bus $B$ 
can be (i) $True$ (\emph{powered}): if there is a live path between $B$ and some generator $E$ 
(not offline by definition of live path), (ii) $False$ (\emph{unpowered}): there is no live path between $B$ and any generator $E$.

\begin{figure}[ht]
	\centering
	\subfigure{\includegraphics [width=.5\textwidth] {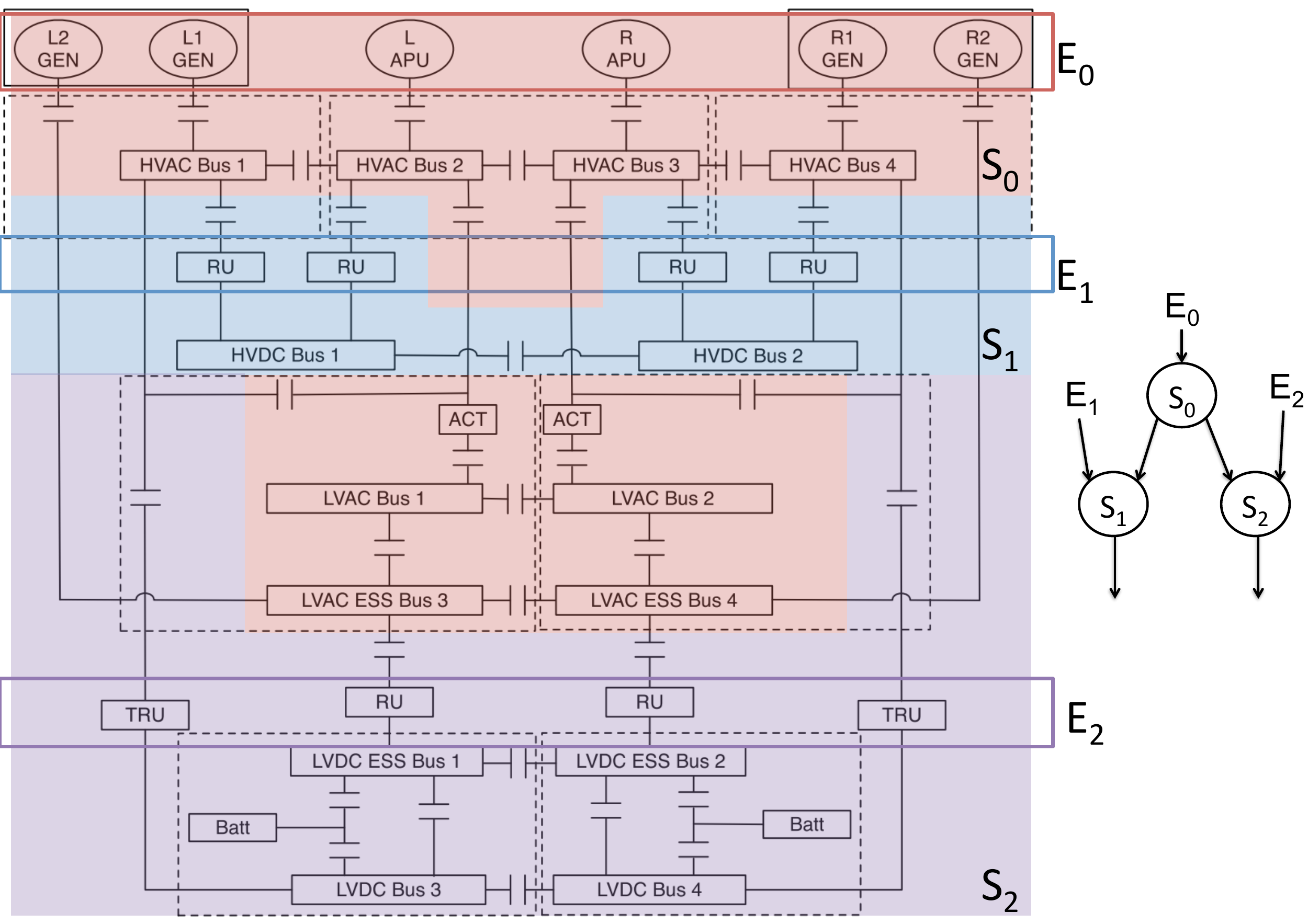}}
	\caption{\small (Left): Single-line diagram of the aircraft EPS (adapted from \cite{nuzzocontract}). Different colors indicate different subsystems identified. (Right): Corresponding Boolean network.}
	\label{fig:largecircuit}
\end{figure}

Boolean network representing the circuit has 22 environment inputs, 32 control inputs and 15 outputs in total. The global assumption, that at least one power source is healthy along with one rectifier from each $E_1$ and $E_2$, has 22 variables and 14 conjunctive clauses. We require all buses to be powered and no AC coupling between power sources, that is there should not be a live path between any AC power source. Resulting  the global guarantee has 47 variables and 2142 conjunctive clauses. Solving this problem in a centralized manner requires solving a QSAT problem with 69 variables and 2151 clauses. The computation takes 336 seconds using PicoSAT \cite{Biere_picosatessentials} on a laptop with 2.5 GHz Intel Core i7 processor and 16 GB of RAM. Keep in mind that the returned control protocol is a centralized one and is not implementable in a distributed manner in general.

Then we use Alg.~\ref{alg:main} to solve for the same case. Now we have three Boolean subsystems. This means we have to solve three QSAT problems that are much smaller with 39, 19, and 24 variables and 485, 369, and 1294 clauses (for $S_0,S_1$ and $S_2$, respectively). The system graph is a tree (hence satisfies the forest condition) and the specification satisfies \eqref{eq:conj}. This implies that Alg.~\ref{alg:main} is complete. The computation takes 42 seconds on the same laptop and returns a distributed controller.

\section{Conclusion}

In this paper, we proposed an algorithm for synthesizing local specifications and corresponding local controllers for a Boolean network to guarantee the satisfaction of a global specification. The algorithm is shown to be sound, and it is complete when the global specification and networks interconnection structure satisfy certain structural properties. An application of the proposed approach is demonstrated by synthesizing distributed controllers for an aircraft electric power system. 

In the future, we will extend the presented algorithms to dynamic discrete transition systems and specifications given in certain fragments of linear temporal logic. It would also be interesting to understand the connections between the structural properties identified in this work with those that are known in decentralized control for different system classes and control objectives. 







\appendix

\subsection*{Computation of local guarantees}

Boolean distribution operation is done by representing the Boolean formulas as graphs and using graph properties. This has two advantages: (i) graphs provide a canonical representation for the satisfying set of the formulas (whereas there could be multiple formulas with the same satisfying set), (ii) once the problem is converted to a graph problem, we leverage well-established algorithms from graph theory to find distributions.  

Let us start with some additional graph terminology. Given a graph, a set of nodes, no two of which are connected with an edge is called an \emph{independent set}. 
A graph $\mathcal{H} =(\Node, \Edge)$ is called \emph{bipartite} if its nodes can be partitioned into two independent sets, i.e. $\Node = \Node_1 \cup \Node_2$ and $\Edge \subset  \Node_1 \times \Node_2$.  
$\Node_1$ and $\Node_2$ are also called \emph{parts}.
If every node in one part is connected to every other node in the other part with an edge, the bipartite graph is called \emph{complete}, i.e., $\Edge =  \Node_1 \times \Node_2$. 

Let $H_c=(\Node',\Edge')$ be a complete bipartite subgraph of a bipartite graph $H=(\Node,\Edge)$. Then $H_c$ is called \emph{maximal} if addition of any node to $\Node'$ breaks down the completeness. Putting it differently, $H_c$ is not a subgraph of any other $H'_c$ which is also a complete bipartite subgraph of $H$.

Let $S$ be a Boolean system and $\mathbb{C}=[\Asmp,\Gar]$ be its specification. Now assume that $S_i$ is a leaf node in the system graph and we want to compute its local guarantee. 
In order to perform Boolean distribution, we construct a bipartite graph $H= \left( \left(\prod_{j\neq i} V_{y_j} \bigcup V_{y_i}\right), \Edge \right)$. In this graph, every node $y_i \in V_{y_i}$ represents a different valuation of $S_i$'s outputs. Similarly each node $y \in \prod_{j\neq i} V_{y_j}$ represent all other outputs valuations. Edges are created between nodes that forms an admissible output, i.e.,
$\Edge = \{ (y_i,y) \mid (y_i,y) \in \ws{\Gar} \}.$

\begin{prop}\label{prop:biclique}
	Let $H_c = \left( \left( \prod_{j\neq i} \bar{V}_{y_j}\bigcup\bar{V}_{y_i} \right), \bar{\Edge} \right) $ be a subgraph of $H$ where $\bar{V}_{y_k} \subset {V}_{y_k}$ for all $k$. 
	Now define $\Gar^{\downarrow(i)} : V_{y_i} \to \mathbb{B}$ and $\Gar^{\uparrow(i)} : \prod_{j\neq i} V_{y_j} \to \mathbb{B}$ with their satisfying set $\ws{\Gar^{\downarrow(i)}} = \bar{V}_{y_i}$ and $\ws{\Gar^{\uparrow(i)}} = \prod_{j\neq i} \bar{V}_{y_j}$. Then $\gamma=\{\Gar^{\downarrow(i)},\Gar^{\uparrow(i)}\}$ is a maximal distribution if and only if $H_c$ is a maximal complete bipartite graph.
\end{prop}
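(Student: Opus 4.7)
The plan is to establish the two directions by directly translating the defining property of a distribution into the defining property of a complete bipartite subgraph, and then doing the same for maximality. Both directions follow once the correct dictionary between edges of $H$ and points in $\ws{G}$ is set up, so the real content is just unpacking the definitions.

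First, I would verify the correspondence between distributions and complete bipartite subgraphs (ignoring maximality). By construction of $H$, an edge $(y_i,y)$ exists if and only if $(y_i,y)\in\ws{G}$. Writing $\bar V_{y_i}=\ws{G^{\downarrow(i)}}$ and $\prod_{j\neq i}\bar V_{y_j}=\ws{G^{\uparrow(i)}}$, the distribution condition \eqref{eq:dist2} reads
\[
\forall\, y_i\in\bar V_{y_i},\ \forall\, y\in\prod_{j\neq i}\bar V_{y_j}:\ (y_i,y)\in\ws{G},
\]
which is precisely the statement that every pair of nodes from the two parts of $H_c$ is joined by an edge of $H$, i.e.\ $H_c$ is a complete bipartite subgraph of $H$. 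Thus $\gamma$ is a distribution if and only if $H_c$ is complete bipartite.

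Next I would handle maximality. For the ``only if'' direction, suppose $\gamma$ is a maximal distribution but $H_c$ is not a maximal complete bipartite subgraph. Then there exists a node $v$ in $H$ but not in $H_c$ such that adding $v$ to the appropriate part keeps $H_c$ complete bipartite. That node $v$ is either some $y_i\in V_{y_i}\setminus\bar V_{y_i}$ or some $y\in\prod_{j\neq i}V_{y_j}\setminus\prod_{j\neq i}\bar V_{y_j}$; in the former case, enlarging $\ws{G^{\downarrow(i)}}$ by $y_i$ yields a strictly larger $\bar G^{\downarrow(i)}$ whose pair with $G^{\uparrow(i)}$ still satisfies \eqref{eq:dist2}, contradicting maximality of $\gamma$. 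The second case is symmetric. For the ``if'' direction, suppose $H_c$ is a maximal complete bipartite subgraph but $\gamma$ is not a maximal distribution. Then there exists $\bar\gamma=\{\bar G^{\downarrow(i)},\bar G^{\uparrow(i)}\}$ with, say, $\ws{G^{\downarrow(i)}}\subsetneq\ws{\bar G^{\downarrow(i)}}$ and $\ws{G^{\uparrow(i)}}\subseteq\ws{\bar G^{\uparrow(i)}}$ (or the symmetric situation). Picking any $y_i\in\ws{\bar G^{\downarrow(i)}}\setminus\ws{G^{\downarrow(i)}}$ and adding it to the corresponding part of $H_c$ gives, by the edge characterisation above, a strictly larger complete bipartite subgraph of $H$, contradicting maximality of $H_c$.

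The only mildly delicate point is the maximality argument, since ``maximal'' is defined with the disjunction ``$\ws{G^{\downarrow(i)}}\subsetneq\ws{\bar G^{\downarrow(i)}}$ and/or $\ws{G^{\uparrow(i)}}\subsetneq\ws{\bar G^{\uparrow(i)}}$''; one has to be careful to enumerate both one-sided enlargements and the simultaneous enlargement, but in all cases a single extra node added to one of the two parts of $H_c$ exhibits the contradiction, so the argument reduces to the single-node enlargement case already treated. No additional calculation beyond this dictionary is needed.
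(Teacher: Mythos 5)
Your proof is correct and follows essentially the same route as the paper: unpack the construction of $H$ to show that the distribution condition \eqref{eq:dist2} is exactly completeness of the bipartite subgraph $H_c$, then transfer maximality in both directions. If anything, your single-node-addition argument for the maximality equivalence (and your explicit handling of the ``and/or'' in the definition of maximal distribution) is spelled out more carefully than the paper's rather terse treatment of that step.
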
 

\begin{proof}
	Let $\gamma$ be a distribution. 
	Now define $H_c = \left( \left( \prod_{j\neq i} \bar{V}_{y_j}\bigcup\bar{V}_{y_i} \right), \bar{\Edge} \right)$ where $ \bar{V}_{y_i}=\ws{\Gar^{\downarrow(i)}},  \prod_{j\neq i} \bar{V}_{y_j}= \ws{\Gar^{\uparrow(i)}}$ and $\bar{\Edge} = \Edge \cap \left( \prod_{j\neq i} \bar{V}_{y_j}\times\bar{V}_{y_i} \right)$. By definition of distribution $\forall y_i \in \ws{\Gar^{\downarrow(i)}} : \forall y \in \ws{\Gar^{\uparrow(i)}}: (y_i,y) \in \ws{\Gar}$. This implies $\forall y_i \in \bar{V}_{y_i} : \forall y \in \prod_{j\neq i} \bar{V}_{y_j}: (y_i,y) \in \bar{\Edge}$. Thus $H_c$ is a complete bipartite graph.
	
	Conversely let $H_c = \left( \left( \prod_{j\neq i} \bar{V}_{y_j}\bigcup\bar{V}_{y_i} \right), \bar{\Edge} \right) $ be a complete bipartite subgraph of $H$.  By definition of complete bipartite graph $\forall y_i \in \bar{V}_{y_i} : \forall y \in \prod_{j\neq i} \bar{V}_{y_j}: (y_i,y) \in \bar{\Edge} \subset \Edge$.  Now define $\Gar^{\downarrow(i)} : V_{y_i} \to \mathbb{B}$ and $\Gar^{\uparrow(i)} : \prod_{j\neq i} V_{y_j} \to \mathbb{B}$ with their satisfying set $\ws{\Gar^{\downarrow(i)}} = \bar{V}_{y_i}$ and $\ws{\Gar^{\uparrow(i)}} = \prod_{j\neq i} \bar{V}_{y_j}$. Then $\forall y_i \in \ws{\Gar^{\downarrow(i)}} : \forall y \in \ws{\Gar^{\uparrow(i)}}: (y_i,y) \in \ws{\Gar}$ by construction of $H$. This implies $\gamma=\{\Gar^{\downarrow(i)},\Gar^{\uparrow(i)}\}$ is a local guarantee set.
	
	Finally let the distribution $\gamma=\{\Gar^{\downarrow(i)},\Gar^{\uparrow(i)}\}$ be maximal. This implies that there does not exist another complete bipartite subgraph $\bar{H_c}=\{\prod_{j\neq i} \bar{V}_{y_j}\cup\bar{V}_{y_i}, \bar{\Edge} \}$ such that $\prod_{j\neq i} \bar{V}_{y_j} \supset \prod_{j\neq i} {V}_{y_j}$ and $\bar{V}_{y_i} \supset {V}_{y_i}$. This implies that there does not exist any other subgraph $H'_c$ of $H$ such that $H_c$ is a subgraph of $H'_c$. Thus $H_c$ is maximal. Going in the other direction is also similar. If $H_c$ is maximal, so is $\gamma=\{\Gar^{\downarrow(i)},\Gar^{\uparrow(i)}\}$.
	\end{proof}

Note that with Proposition~\ref{prop:biclique},  the problem of finding a distribution reduces to the problem of finding maximal complete bipartite subgraphs. The latter can be done using standard results from graph theory literature \cite{alexe2004consensus}. In particular, in our implementation, we have used a variant of the Bron-Kerbosch algorithm \cite{bron1973algorithm} to find all maximal distributions.

\bibliographystyle{abbrv}
\bibliography{ref.bib}

\end{document}